 \newtheorem{thm}{Theorem}[section]
 \theoremstyle{definition}
 \theoremstyle{remark}
 \numberwithin{equation}{section}
\def\I{{\rm I}}
\def\cl{{C}\!\ell}
\def\R{{\Bbb R}}
\def\Adj{{\rm Adj}}
\def\First{{\rm First}}
\def\Last{{\rm Last}}
\def\Even{{\rm Even}}
\def\Odd{{\rm Odd}}
\begin{document}

%
%
%
%
%
%
%
%
%

\title[Method of averaging in Clifford algebras]
 {Method of Averaging\\ in Clifford Algebras}

\author[D.~S.~Shirokov]{D.~S.~Shirokov}

\address{
1. A. A. Kharkevich Institute for Information Transmission Problems, Russian Academy of Sciences,\\
Bolshoy Karetny per. 19, 127994, Moscow, Russia.\\
2. N. E. Bauman Moscow State Technical University,\\
ul. Baumanskay 2-ya, 5, 105005, Moscow, Russia.
}

\email{dm.shirokov@gmail.com}

\thanks{This work was supported by Russian Science Foundation (project RSF 14-11-00687, Steklov Mathematical Institute).
}
\subjclass{15A66}

\keywords{Clifford algebra, averaging, projection, adjoint sets of multi-indices, Reynolds operator, Salingaros' vee group, commutator equation}

\date{March 29, 2015}

\begin{abstract}
In this paper we consider different operators acting on Clifford algebras. We consider Reynolds operator of Salingaros' vee group. This operator ``average'' an action of Salingaros' vee group on Clifford algebra. We consider conjugate action on Clifford algebra.

We present a relation between these operators and projection operators onto fixed subspaces of Clifford algebras.
Using method of averaging we present solutions of system of commutator equations.
\end{abstract}

\maketitle

\section{Introduction}

In this paper we consider the following operators acting on real Clifford algebras
\begin{eqnarray}
F_S(U)=\frac{1}{|S|}\sum_{A\in S}(e^A)^{-1}U e^A,\label{Reyn0}
\end{eqnarray}
where
$$e^A= e^{a_1 a_2\ldots a_k} = e^{a_1} e^{a_2} \cdots  e^{a_k},\qquad A = a_1 a_2 \ldots a_k,\qquad a_1 < a_2 < \cdots < a_k,$$
are basis elements generated by an orthonormal basis in vector space $V$ over a field $\R$.
Here $S\subseteq \I$ is a subset of the set of all ordered multi-indices $A$ of the length from $0$ to $n$. We denote the number of elements in $S$ by $|S|$. Note that not for every subset $S\subseteq I$ in (\ref{Reyn0}), the set $\{e^A | A\in S\}$ is a group.

We can consider Reynolds operator (see, for example, \cite{CLS}) acting on a Clifford algebra element $U\in\cl(p,q)$
\begin{eqnarray}
R_G(U)=\frac{1}{|G|}\sum_{g\in G}g^{-1} U g,\label{Reyn}
\end{eqnarray}
where $|G|$ is the number of elements in a finite subgroup $G\subset \cl(p,q)^\times$. We denote the group of all invertible Clifford algebra elements by $\cl(p,q)^\times$.
These operators ``average'' an action of group $G$ on Clifford algebra $\cl(p,q)$.

We can take Salingaros' vee group $G=\{\pm e^A, A\in\I\}$ (see \cite{Abl1}, \cite{Abl2}, \cite{Abl3}), where $e^A$ are basis elements of Clifford algebra $\cl(p,q)$. Note that Salingaros' vee group is a finite subroup of spin groups (groups ${\rm Pin(p,q)}$ and $\rm Spin(p,q)$, see \cite{Abl2}, \cite{Helm1}, \cite{Pauli3}, \cite{Pauli4}).

We can write in this case
$$\frac{1}{|G|}\sum_{g\in G}g^{-1} U g=\frac{1}{2^{n+1}}\sum_{A\in \I}((e^A)^{-1}Ue^A+ (-e^A)^{-1}U(-e^A))=\frac{1}{2^n}\sum_{A\in \I}(e^A)^{-1}Ue^A.$$
We consider such operators further in this paper.

Note that operators (\ref{Reyn}) are often used in representation theory of finite groups (see \cite{Serr}, \cite{Dixon}, \cite{Babai}).
We use these operators in Clifford algebras to obtain some new properties.

We present a relation between these operators and projection operators onto fixed subspaces of Clifford algebras.

Using method of averaging we present solutions $X$ of the system of commutator equations
$$e^A X+\epsilon X e^A=Q^A,\qquad A\in S\subseteq \I,\qquad \epsilon\in\R^\times$$
for some given elements\footnote{Note that $A$ in $Q^A$ is a label (multi-index) that serves to pair off an arbitrary Clifford element $Q^A$ with basis element $e^A$.} $Q^A\in\cl(p,q)$. We use notation $\R^\times=\R\setminus 0$.

\section{Clifford algebras, ranks, projection operators}

Consider real Clifford algebra $\cl(p,q)$ with $p+q=n$, $n\geq1$.
The construction of Clifford algebra is discussed in details in \cite{Lounesto}, \cite{Helm2}, \cite{Lam} or \cite{Marchuk:Shirokov}.

Let $e$ be the identity element and let $e^a$, $a=1,\ldots,n$ be generators\footnote{We use notation from \cite{Benn:Tucker} (see, also \cite{Marchuk}). Note that there exists another notation instead of $e^a$ - with lower indices. But we use upper indices because we take into account relation with differential forms. Note that $e^a$ is not exponent.} of the Clifford algebra $\cl(p,q)$,
$$
e^a e^b+e^b e^a=2\eta^{ab}e,
$$
where $\eta=||\eta^{ab}||=||\eta_{ab}||$ is the diagonal matrix with $p$ pieces of $+1$ and $q$ pieces of $-1$ on the diagonal. Elements
$$
e^{a_1\ldots a_k}=e^{a_1}\cdots e^{a_k},\qquad a_1<\cdots<a_k,\,k=1,\ldots,n,
$$
together with the identity element $e$ form the basis of the Clifford
algebra. The number of basis elements is equal to $2^n$.

Let us denote the set of ordered multi-indices of the length from $0$ to $n$ by
\begin{eqnarray}
\I=\{-,\, 1,\, \ldots,\, n,\, 12,\, 13,\, \ldots,\, 1\ldots n\},\label{Inab}
\end{eqnarray}
where ``--'' is an empty multi-index. So, we have the basis of Clifford algebra $\mathfrak{B}=\{e^A,\, A\in\I \}$, where $A$ is an arbitrary ordered multi-index\footnote{We use notation $e^A$ from \cite{Benn:Tucker}. Note that $e^A$ is not exponent, $A$ is a multi-index.}. Let us denote the length of multi-index $A$ by $|A|$. So, we use notation
$$e^A= e^{a_1 a_2\ldots a_k} = e^{a_1} e^{a_2} \cdots  e^{a_k},\qquad A = a_1 a_2 \ldots a_k,\qquad a_1 < a_2 < \cdots < a_k.$$

Below we also consider different subsets $S\subseteq \I$:
$$\I_{\Even}=\{A\in\I,\, |A| - \mbox{even}\},\qquad \I_{\Odd}=\{A\in\I,\, |A| - \mbox{odd}\}.$$

We have $e_a=\eta_{ab}e^{b}$, $e^a=\eta^{ab}e_b$, where we use Einstein summation convection (there is a sum over index $b$).
We denote\footnote{We want to deal only with ordered multi-indices. So, multi-index $a_1 \ldots a_k$ is indivisible object in our consideration. It is not a set of indices $a_1, a_2, \ldots a_k$. That's why $e_{a_1 \ldots a_k}\neq e_{a_1}\cdots e_{a_k}$ in our notation.} expressions
$$\eta_{a_1 b_1}\cdots \eta_{a_k b_k} e^{b_k}\cdots e^{b_1}=e_{a_k}\cdots e_{a_1}=(e^{a_1\ldots a_k})^{-1},\quad a_1<\cdots< a_k.$$
by $e_{a_1\ldots a_k}$ (not $e_{a_k\ldots a_1}$). So, $e_A=(e^A)^{-1}\,$ $\forall A\in\I$ in our notation.

Any Clifford algebra element\footnote{We denote Clifford algebra elements by capital letters (not small letters that is more traditional) to avoid confusion with numbers because sometimes Clifford algebra elements have indices or multi-indices in this paper too (see \cite{Marchuk}).} $U\in\cl(p,q)$ can be written in the form
\begin{eqnarray}
U=ue+u_a e^a+\sum_{a_1<a_2}u_{a_1 a_2}e^{a_1 a_2}+\cdots+u_{1\ldots n}e^{1\ldots n}=u_A e^A,\label{U}
\end{eqnarray}
where we have a sum\footnote{We use Einstein summation convection for multi-indices too.} over ordered multi-index $A$ and $\{u_A\}=\{u, u_a, u_{a_1 a_2}, \ldots, u_{1\ldots n}\}$ are real numbers.

We denote by $\cl_k(p,q)$, $k=0, 1, \ldots, n$ the vector spaces that span over the basis elements
$e^{a_1\ldots a_k}$. Elements of $\cl_k(p,q)$ are said to be
elements of rank\footnote{There is a difference in notation in literature. We use term ``rank'' and notation $\cl_k(p,q)$ because we take into account relation with differential forms, see \cite{Marchuk}.} $k$. We have
\begin{eqnarray}
\cl(p,q)=\bigoplus_{k=0}^{n}\cl_k(p,q).\label{ranks}
\end{eqnarray}
We consider (linear) projection operators on the vector subspaces $\cl_k(p,q)$
\begin{eqnarray}
\pi_k: \cl(p,q)\to \cl_k(p,q),\qquad \pi_k(U)=\sum_{a_1<\cdots<a_k}u_{a_1\ldots a_k}e^{a_1 \ldots a_k}.\label{proj}
\end{eqnarray}

Clifford algebra $\cl(p,q)$ is a superalgebra. It is represented as the direct sum of even and odd subspaces (of {\it even} and {\it odd} elements respectively)
\begin{eqnarray}
\cl(p,q)=\cl_{\Even}(p,q)\oplus\cl_{\Odd}(p,q),\nonumber
\end{eqnarray}
$$\cl_{\Even}(p,q)=\bigoplus_{k - even}\cl_k(p,q),\qquad \cl_{\Odd}(p,q)=\bigoplus_{k - odd}\cl_k(p,q).$$

\section{Reynolds operator of the Salingaros' vee group}

We have the following well-known statement about center
$${\rm Cen}(\cl(p,q))=\{U\in \cl(p,q) \,|\, UV=VU\quad \forall V\in\cl(p,q)\}$$
of Clifford algebra $\cl(p,q)$.

\begin{thm}\label{theoremCentr} We have
\begin{eqnarray}
{\rm Cen}(\cl(p,q))=\left\lbrace\begin{array}{ll}
\cl_0(p,q), & \mbox{$n$ is even};\\
\cl_0(p,q)\oplus \cl_n(p,q), & \mbox{$n$ is odd.}
\end{array}
\right.
\end{eqnarray}
\end{thm}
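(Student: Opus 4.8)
The plan is to reduce the problem to commutation with the generators, establish the sign rule governing how a single generator passes through an arbitrary basis element, and then read off the center by a parity count on the rank.

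First I would observe that the set of elements commuting with a fixed $U$ is a subalgebra of $\cl(p,q)$ containing $e$; since the generators $e^a$, $a=1,\ldots,n$, together with $e$ generate the whole algebra, an element $U$ lies in ${\rm Cen}(\cl(p,q))$ if and only if $U e^a = e^a U$ for every $a$. Writing $U=u_A e^A$ as in (\ref{U}), this condition becomes $\sum_A u_A(e^A e^a - e^a e^A)=0$ for each generator $e^a$.

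Next I would establish the key commutation rule. Since distinct generators anticommute while $e^a$ meets itself at most once inside $e^A$, carrying $e^a$ through a basis element $e^A$ of rank $k=|A|$ yields
$$e^A e^a = (-1)^{k}e^a e^A \quad (a\notin A),\qquad e^A e^a=(-1)^{k-1}e^a e^A\quad(a\in A).$$
Thus $e^a$ commutes with $e^A$ precisely when $k$ is even in the case $a\notin A$, and precisely when $k$ is odd in the case $a\in A$. The crucial auxiliary point is that, for fixed $a$, the products $\{e^a e^A : A\in\I\}$ are, up to nonzero scalars, pairwise distinct basis elements, because $A\mapsto A\,\triangle\,\{a\}$ is an involution on multi-indices. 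Hence the commutator condition decouples term by term: $U$ is central if and only if every $A$ with $u_A\neq0$ satisfies both parity constraints simultaneously for all $a$.

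Finally I would run the parity count on $k=|A|$. If $A=-$ is empty ($k=0$) only the case $a\notin A$ arises and $k$ is even, so $e$ is always central; if $A=1\ldots n$ ($k=n$) only the case $a\in A$ arises and the requirement is that $n$ be odd; while for any intermediate $A$ with $0<k<n$ there exist both an index $a\in A$ and an index $a\notin A$, forcing $k$ to be simultaneously odd and even, which is impossible, so no such $e^A$ is central. Collecting these cases gives ${\rm Cen}(\cl(p,q))=\cl_0(p,q)$ when $n$ is even and $\cl_0(p,q)\oplus\cl_n(p,q)$ when $n$ is odd, as claimed. The only real obstacle is bookkeeping the sign rule correctly (equivalently, the general identity $e^A e^B=(-1)^{|A||B|+|A\cap B|}e^B e^A$); once that is in hand the parity analysis is immediate.
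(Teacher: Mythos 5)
Your proof is correct. Note, however, that the paper offers no proof of this theorem at all: it is introduced as ``the following well-known statement about center'' and simply cited later (e.g.\ in the proofs of Theorems \ref{theoremSvPoln} and \ref{theoremCommBas2}), so there is no argument in the text to compare yours against. Your route is the standard one and is sound: reducing centrality to commutation with the generators $e^a$ is legitimate because the $e^a$ generate the algebra; your sign rule is exactly the special case $m=1$ of the relation $e^{a_1\ldots a_k}e^{b_1\ldots b_m}=(-1)^{km-i}e^{b_1\ldots b_m}e^{a_1\ldots a_k}$ that the paper itself records in the proof of Theorem \ref{theoremCommBas2}; and the decoupling step is justified since $A\mapsto A\,\triangle\,\{a\}$ is a bijection on multi-indices, so the elements $e^a e^A$ are distinct basis elements up to nonzero scalars and the commutator condition must hold term by term. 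The final parity count (empty $A$ always central, $A=1\ldots n$ central iff $n$ odd, intermediate ranks excluded by the clash between the two parity constraints) is exactly right. The one point worth stating explicitly, which you use implicitly, is that $e^a$ commutes with itself because $(e^a)^2=\eta^{aa}e$ is a nonzero scalar; with that remark in place the argument is complete.
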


Let us consider the following operator (Reynolds operator of the Salingaros' vee group $\{e^A, A\in\I\}$, see above)
$$F(U)=\frac{1}{2^n} e_A U e^A,$$
where we have a sum over multi-index $A\in\I$.

\begin{thm}\label{theoremSvPoln} We have
\begin{eqnarray}F(U)=\frac{1}{2^n}e_A U e^A=\left\lbrace
\begin{array}{ll}
\pi_0(U), & \parbox{.2\linewidth}{if $n$ is even;} \\
\pi_0(U)+\pi_n(U), & \parbox{.2\linewidth}{if $n$ is odd,}
\end{array}
\right.\label{poln}
\end{eqnarray}
where $\pi_0$ and $\pi_n$ are projection operators (see (\ref{proj})) onto the subspaces of fixed ranks. Operator $F$ is a projector $F^2=F$ (on the center of Clifford algebra).
\end{thm}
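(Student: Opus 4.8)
The plan is to use the linearity of $F$ to reduce the whole statement to its action on basis elements. Since $F$ is linear and every $U$ expands as $U=u_A e^A$, it suffices to compute $F(e^B)$ for each $B\in\I$ and then reassemble. So first I would fix a basis element $e^B$ and evaluate $F(e^B)=\frac{1}{2^n}\sum_{A\in\I}e_A e^B e^A$.

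The structural fact that makes this tractable is that any two basis elements commute or anticommute: for all $A,B\in\I$ there is a sign $(-1)^{\sigma(A,B)}$, $\sigma(A,B)\in\{0,1\}$, with $e^A e^B=(-1)^{\sigma(A,B)}e^B e^A$. Left-multiplying by $e_A=(e^A)^{-1}$ gives $e_A e^B e^A=(-1)^{\sigma(A,B)}e^B$, hence $F$ is diagonal in the basis:
$$F(e^B)=c_B\,e^B,\qquad c_B=\frac{1}{2^n}\sum_{A\in\I}(-1)^{\sigma(A,B)}.$$
The entire problem is now reduced to evaluating the scalars $c_B$.

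The key step, and the one I expect to be the main obstacle, is to show that for a non-central $e^B$ exactly half of the basis elements anticommute with it, so that $c_B=0$. I would do this via a character argument. For fixed $B$ define $\phi$ on the Salingaros vee group $G=\{\pm e^A,\,A\in\I\}$ by $g\,e^B=\phi(g)\,e^B g$, $\phi(g)\in\{\pm1\}$. A one-line computation, $ (gh)e^B=\phi(h)\,g\,e^B h=\phi(g)\phi(h)\,e^B(gh)$, shows $\phi$ is a homomorphism $G\to\{\pm1\}$, and clearly $\phi(-e^A)=\phi(e^A)$. Therefore $\sum_{g\in G}\phi(g)=2\sum_{A\in\I}\phi(e^A)=2\sum_{A\in\I}(-1)^{\sigma(A,B)}$, while the sum of a character over a finite group is $|G|=2^{n+1}$ if $\phi$ is trivial and $0$ otherwise. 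Hence $c_B=1$ when $\phi$ is trivial and $c_B=0$ when $\phi$ is nontrivial, and $\phi$ is trivial precisely when $e^B$ commutes with every basis element, i.e. when $e^B$ is central.

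To finish, I invoke Theorem \ref{theoremCentr}: the central basis elements are $e$ alone when $n$ is even, and $e$ together with $e^{1\ldots n}$ when $n$ is odd. Thus $c_B=1$ exactly for $B=-$ (and additionally $B=1\ldots n$ when $n$ is odd) and $c_B=0$ for all remaining $B$, which is precisely formula (\ref{poln}) expressed through $\pi_0$ and $\pi_n$. Idempotency is then immediate: each $c_B\in\{0,1\}$ satisfies $c_B^2=c_B$, so $F^2(e^B)=c_B^2 e^B=c_B e^B=F(e^B)$, and by linearity $F^2=F$, with image the center of $\cl(p,q)$.
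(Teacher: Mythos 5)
Your proof is correct, and it takes a different route from the paper's at the key step. The paper first shows that $F(U)$ is \emph{central} by the standard Reynolds-operator trick: conjugating $F(U)$ by any generator $e^a$ merely reindexes the sum ($e^Ae^a=\pm e^B$ and the signs cancel between $(e^Ae^a)^{-1}$ and $e^Ae^a$), so $F(U)$ commutes with all generators; since $F$ clearly preserves each line $\R e^B$, it must annihilate every non-central basis element, and the surviving cases $e_Ae\,e^A=2^ne$ and $e_Ae^{1\ldots n}e^A=2^ne^{1\ldots n}$ (for odd $n$) are computed directly. You instead diagonalize $F$ on the basis and evaluate the eigenvalue $c_B$ head-on, recognizing $A\mapsto(-1)^{\sigma(A,B)}$ as a character of the Salingaros vee group and invoking the vanishing of a nontrivial character sum. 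Both arguments lean on Theorem \ref{theoremCentr} to identify which $e^B$ survive. What your approach buys: it proves, as a byproduct, that a non-central basis element anticommutes with exactly half of the $e^A$ --- which is the content of the paper's Theorem \ref{theoremCommBas2}, there established by a separate combinatorial computation with binomial coefficients --- and it makes the idempotency $F^2=F$ immediate from $c_B\in\{0,1\}$, where the paper only asserts it is ``easy to verify.'' What the paper's approach buys: the reindexing argument shows at once that the image of $F$ lies in the center for \emph{any} finite subgroup of $\cl(p,q)^\times$, which is the viewpoint the rest of the paper exploits.
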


\begin{proof} We have
$$(e^a)^{-1} F(U) e^a=\sum_{A} (e^{A} e^a)^{-1} F(U) (e^{A} e^a)=\sum_{B} (e^{B})^{-1} F(U) e^{B}=F(U).$$
So, $F(U)$ is in the center of Clifford algebra (see Theorem \ref{theoremCentr}).
For elements $U$ of ranks $k=1, \ldots, n-1$ (and $k=n$ in the case of even $n$) we have $F(U)=0$.
In other particular cases we have $e_A e^A=2^n e$ and (in the case of odd $n$) $e_A e^{1\ldots n} e^A=2^n e^{1\ldots n}$.
It is also easy to verify that $F^2=F$.
\end{proof}

Note that ${\rm Cen}(\cl(p,q))$ is the ``ring of invariants'' (in the language of \cite{CLS}) of Salingaros' vee group.

\begin{thm} Let an element $X\in \cl(p,q)$ satisfy the system of $2^n$ equations with some given elements\footnote{Note that $A$ in $Q^A$ is a label (multi-index) that serves to pair off an arbitrary Clifford element $Q^A$ with basis element $e^A$.} $Q^A\in\cl(p,q)$
\begin{eqnarray}
e^A X+\epsilon X e^A=Q^A \quad \forall A\in\I,\qquad \epsilon\in\R^\times.\label{se1}
\end{eqnarray}
If $\epsilon=-1$ (commutator case), then this system of equations either has no solution or it has a unique solution up to element of center:
\begin{eqnarray}
X=-\frac{1}{2^n}Q^A e_A+Z,\qquad Z\in{\rm Cen}(\cl(p,q)).
\end{eqnarray}

If $\epsilon\neq -1$, then this system of equations either has no solution or it has a unique solution
\begin{eqnarray}
X=
\left\lbrace
\begin{array}{ll}
\frac{1}{2^{n}\epsilon}\left(Q^Ae_A-\frac{1}{(\epsilon+1)}\pi_0(Q^A e_A)\right), & \mbox{if $n$ is even},\\\\
\frac{1}{2^{n}\epsilon}\left(Q^Ae_A-\frac{1}{(\epsilon+1)}(\pi_0(Q^A e_A)+\pi_n(Q^A e_A))\right),  & \mbox{if $n$ is odd}.
\end{array}
\right.
\end{eqnarray}
\end{thm}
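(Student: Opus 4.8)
The plan is to reduce the entire family of $2^n$ equations to a single ``averaged'' equation by feeding the system into the Reynolds operator $F$ of Theorem \ref{theoremSvPoln}. Suppose $X$ is a solution. For each fixed $A\in\I$ I would multiply the equation $e^A X+\epsilon X e^A=Q^A$ on the right by $e_A=(e^A)^{-1}$, using $e^A e_A=e$, to obtain $e^A X e_A+\epsilon X=Q^A e_A$. Summing over all $A\in\I$ and observing that $(e^A)^{-1}$ and $e^A$ differ only by the scalar $\pm 1$ (so that $\sum_A e^A X e_A=\sum_A e_A X e^A=2^n F(X)$) yields the single relation
$$2^n F(X)+2^n\epsilon X=Q^A e_A,$$
where the right-hand side carries the Einstein summation over $A$. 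This is the crux of the argument: the whole system has been compressed into one equation linking $X$ with its average $F(X)$.

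Next I would apply $F$ once more to this relation. By Theorem \ref{theoremSvPoln} the operator $F$ is a projector, $F^2=F$, onto ${\rm Cen}(\cl(p,q))$; hence $F(X)$ is central, $F(F(X))=F(X)$, and
$$2^n(\epsilon+1)F(X)=F(Q^A e_A).$$
Everything now hinges on whether $\epsilon+1$ vanishes, which is why the two cases of the theorem arise.

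For $\epsilon\neq-1$ I would solve the last relation for $F(X)=\frac{1}{2^n(\epsilon+1)}F(Q^A e_A)$ and substitute it back into $2^n F(X)+2^n\epsilon X=Q^A e_A$, giving
$$X=\frac{1}{2^n\epsilon}\left(Q^A e_A-\frac{1}{\epsilon+1}F(Q^A e_A)\right).$$
Replacing $F$ by the projectors from Theorem \ref{theoremSvPoln}, namely $F=\pi_0$ for even $n$ and $F=\pi_0+\pi_n$ for odd $n$, reproduces exactly the two displayed formulas. Since the averaged equation is a necessary consequence of the system and determines $X$ uniquely, there is at most one solution, so the alternative is precisely ``no solution, or this unique $X$.''

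For $\epsilon=-1$ the relation $2^n(\epsilon+1)F(X)=F(Q^A e_A)$ degenerates into the solvability condition $F(Q^A e_A)=0$; if this fails, no solution exists. When it holds, the averaged equation reads $X=F(X)-\frac{1}{2^n}Q^A e_A$, so every solution has the form $X=-\frac{1}{2^n}Q^A e_A+Z$ with $Z=F(X)\in{\rm Cen}(\cl(p,q))$; conversely any central $Z$ may be added, since central elements commute with each $e^A$ and hence solve the homogeneous system, giving uniqueness up to the center. I expect the main obstacle to be precisely this Fredholm-type bookkeeping in the commutator case: one must recognize that $F$ being an idempotent onto the center forces the $\epsilon=-1$ system to be solvable only when the central average $F(Q^A e_A)$ vanishes, and that the solution is then pinned down only modulo that center.
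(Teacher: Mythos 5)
Your proposal is correct and follows essentially the same route as the paper: multiply each equation on the right by $e_A$, sum over $A\in\I$, and use Theorem \ref{theoremSvPoln} to recognize $\sum_A e^A X e_A=2^n F(X)$ as the projection onto the center, then split into the cases $\epsilon=-1$ and $\epsilon\neq-1$. You merely spell out more explicitly (via $F^2=F$) the step the paper compresses into ``Using $X=\sum_k\pi_k(X)$ and Theorem \ref{theoremCentr}, we obtain the statement.''
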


\begin{proof} Let us multiply each equation by $e_A$ on the right and add them (see Theorem \ref{theoremSvPoln}):
$$e^A X e_A+\epsilon X e^A e_A=Q^A e_A \quad\Rightarrow\quad 2^n \pi_{center}(X)+\epsilon X 2^n =Q^A e_A,$$
where $\pi_{center}$ is the projection on the center of Clifford algebra. Using $X=\sum_{k=0}^n\pi_k(X)$ and Theorem \ref{theoremCentr}, we obtain statement of the theorem.
\end{proof}

Note that we have a solution or have no solution of the system of commutator equations. It depends on elements $Q^A$ (it suffices to substitute solution in equation and check the equality).

As suggested by one of the referees, in the case $\epsilon\neq -1$ we can take solution of the ``first'' equation (\ref{se1}) $X=\frac{1}{1+\epsilon}Q$ (here $Q$ has empty multi-index) and substitute it in the other equations. We obtain the condition
\begin{eqnarray}Q^A=\frac{1}{1+\epsilon}(e^A Q+\epsilon Q e^A),\label{qa}\end{eqnarray}
that we can also rewrite in the form\footnote{We use operators $\pi_{[A]}$ and $\pi_{\{A\}}$ here. See about them below.} $(e^A)^{-1}Q^A=\pi_{[A]}(Q)+\frac{1-\epsilon}{1+\epsilon}\pi_{\{A\}}(Q)$.

So, we can say, that system of equations (\ref{se1}) has solution in the case $\epsilon\neq -1$ if and only if given elements $Q^A$ have the following connection (\ref{qa}) with the first of them ($Q$).

In the case $\epsilon=-1$ it is not difficult to understand that $\pi_{center}(Q^A (e^A)^{-1})=0$ is the necessary condition for the system (\ref{se1}) to have a solution. Sufficient condition is
$$-e^A (\frac{1}{2^n}Q^B e_B)+(\frac{1}{2^n}Q^B e_B) e^A=Q^A,\qquad \forall A\in\I$$
with summation over multi-index $B$. It can be rewritten in the form $$(e^A)^{-1}Q^A=\frac{1}{2^n}\pi_{\{A\}}(e_B Q^B),\qquad \forall A\in\I.$$

\section{Adjoint sets of multi-indices}

We call ordered multi-indices $a_1 \ldots a_k$ and $b_1 \ldots b_l$ {\it adjoint multi-indices} if they have no common indices and they form multi-index $1\ldots n$ of the length $n$. We write $b_1\ldots b_l=\widetilde{a_1 \ldots a_k}$ and $a_1 \ldots a_k=\widetilde{b_1 \ldots b_l}$. We call corresponding basis elements
$e^{a_1 \ldots a_m}$, $e^{b_1 \ldots b_l}$ {\it adjoint} and write $e^{b_1 \ldots b_l}=e^{\widetilde{a_1 \ldots a_m}}$, $e^{a_1 \ldots a_m}=e^{\widetilde{b_1 \ldots b_l}}$. We can also write that $e^{a_1 \ldots a_m} e^{b_1 \ldots b_l}=\pm e^{1\ldots n}$ and
$\star e^{a_1 \ldots a_m} = \pm e^{b_1 \ldots b_l}$, where $\star$ is Hodge operator\footnote{It is the analogue of Hodge operator in Clifford algebra $\star U=\tilde{U} e^{1\ldots n}$, where $\tilde{}$ is the reversion anti-automorphism in the Clifford algebra $\cl(p,q)$ \cite{Lounesto}.}.
We denote the sets of corresponding $2^{n-1}$ multi-indices by $\I_{\Adj}$ and $\widetilde{\I_{\Adj}}=\I \setminus \I_{\Adj}$. So, for each multi-index in $\I_{\Adj}$ there exists adjoint multi-index in $\widetilde{\I_{\Adj}}$. We have
\begin{eqnarray}
\mathfrak{B}=\{e^A \,|\, A\in\I \}=\{e^A \,|\, A\in\I_{\Adj}\} \cup \{e^A \,|\, A\in \widetilde{\I_{\Adj}}\}.\label{razb}
\end{eqnarray}

For Clifford algebra $\cl(p,q)$ of dimension $n=p+q$ we have $2^{2^{n-1}-1}$ different partitions\footnote{For example, in the case $n=2$ we have 2 partitions $\{e, e^1, e^2, e^{12}\}=\{e, e^1\} \cup \{e^{12}, e^{2}\}=\{e, e^2\}\cup \{e^{12}, e^2\}$.} of the form (\ref{razb}). For example,
$$\I_{\Adj}=\I_{\First},\qquad \widetilde{\I_{\Adj}}=\I\setminus\I_{\First}=\I_{\Last},$$
where $\I_{\First}$ consists of the first (in the order) $2^{n-1}$ multi-indices of the set $\I$ (\ref{Inab}).
In the case of odd $n$ we can write
$$\I_{\First}=\{A\in\I,\quad |A|\leq\frac{n-1}{2}\},\qquad \I_{\Last}=\{A\in\I,\quad |A|\geq\frac{n+1}{2}\}.$$

In the case of odd $n$ we can consider the following adjoint sets
$$\I_{\Adj}=\I_{\Even},\qquad \widetilde{\I_{\Adj}}=\I_{\Odd}.$$

\section{Commutative properties of basis elements}

\begin{thm}\label{theoremCommBas2} Consider real Clifford algebra $\cl(p,q)$, $p+q=n$ and the set of basis elements $\mathfrak{B}=\{e^A, A\in\I\}$.

Then each element of this set (if it is neither $e$ nor $e^{1\ldots n}$) commutes with $2^{n-2}$ even elements of the set $\mathfrak{B}$, commutes with $2^{n-2}$ odd elements of the set $\mathfrak{B}$, anticommutes with $2^{n-2}$ even elements of the set $\mathfrak{B}$ and anticommutes with $2^{n-2}$ elements of the set $\mathfrak{B}$.
Element $e$ commutes with all elements of the set $\mathfrak{B}$.
\begin{enumerate}
  \item if $n$ - even, then $e^{1\ldots n}$ commutes with all $2^{n-1}$ even elements of the set $\mathfrak{B}$ and anticommutes with all $2^{n-1}$ odd elements of the set $\mathfrak{B}$;
  \item if $n$ - odd, then $e^{1\ldots n}$ commutes with all $2^n$ elements of the set $\mathfrak{B}$.
\end{enumerate}
\end{thm}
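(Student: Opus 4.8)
The plan is to base everything on one commutation-sign formula for basis elements and then to count by the binomial theorem. The key lemma I would prove first is that for any two ordered multi-indices $A,B\in\I$,
$$e^A e^B=(-1)^{|A||B|+|A\cap B|}e^B e^A,$$
where $|A\cap B|$ is the number of indices common to $A$ and $B$. To get this I would first treat a single generator $e^b$: if $b\notin A$ then $e^b$ anticommutes with each of the $|A|$ factors of $e^A$, so $e^A e^b=(-1)^{|A|}e^b e^A$; if $b\in A$ one slides $e^b$ up to its twin, uses $e^b e^b=\eta^{bb}e$, and comparing both sides yields $e^A e^b=(-1)^{|A|-1}e^b e^A$. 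Writing $e^B=e^{b_1}\cdots e^{b_l}$ with $l=|B|$ and pushing $e^A$ past the factors one at a time, the individual signs multiply to $(-1)^{|A|l+|A\cap B|}$, which is the asserted formula.

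Next I would fix $A$ with $k:=|A|$ satisfying $1\le k\le n-1$ and parametrise the $B\in\I$ by $j:=|A\cap B|$ and $m:=|B\setminus A|$, so that $0\le j\le k$, $0\le m\le n-k$, there are $\binom{k}{j}\binom{n-k}{m}$ such $B$, and $|B|=j+m$ with commutation sign $(-1)^{k(j+m)+j}$. Let $N_{++},N_{+-},N_{-+},N_{--}$ count the $B$ by parity of $|B|$ (even/odd) and by commuting/anticommuting. I would then form the four signed sums
$$P_1=\sum_B 1,\quad P_2=\sum_B(-1)^{|B|},\quad P_3=\sum_B(-1)^{k|B|+j},\quad P_4=\sum_B(-1)^{|B|+k|B|+j},$$
which are precisely the four $\pm$-combinations of $N_{++},N_{+-},N_{-+},N_{--}$.

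Each sum factors over $j$ and $m$ by the binomial theorem, giving $P_1=2^n$, $P_2=(1-1)^n=0$, and
$$P_3=(1+(-1)^{k+1})^k(1+(-1)^k)^{n-k},\qquad P_4=(1+(-1)^k)^k(1+(-1)^{k+1})^{n-k}.$$
This is exactly where the hypothesis is used: for $k$ even the factor $(1-1)^k=0$ kills $P_3$ and $(1-1)^{n-k}=0$ kills $P_4$, while for $k$ odd the two vanishing factors switch roles; both cases need $1\le k$ and $1\le n-k$, i.e. $1\le k\le n-1$. Hence $P_3=P_4=0$, and solving the resulting $4\times4$ (Hadamard-type) system forces $N_{++}=N_{+-}=N_{-+}=N_{--}=2^{n-2}$, which is the main assertion.

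Finally I would dispose of the two excluded elements directly from the sign formula. For $e$ one has $k=0$, so the sign is $+1$ for every $B$ and $e$ commutes with all $2^n$ basis elements. For $e^{1\ldots n}$ one has $k=n$ and $A\cap B=B$, so the sign is $(-1)^{(n+1)|B|}$; when $n$ is even this is $(-1)^{|B|}$, giving commutation with the $2^{n-1}$ even elements and anticommutation with the $2^{n-1}$ odd ones, and when $n$ is odd it is always $+1$, so $e^{1\ldots n}$ is central, in agreement with Theorem \ref{theoremCentr}. The main obstacle is the sign lemma in the first paragraph: the bookkeeping must account for common indices, where adjacent equal generators annihilate through $e^b e^b=\eta^{bb}e$ rather than merely contributing a sign. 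Once that formula is secured, the even/odd $2^{n-2}$ split is nothing but the vanishing of $P_3$ and $P_4$, which is algebraically forced by $1\le|A|\le n-1$.
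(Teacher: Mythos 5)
Your proposal is correct and follows essentially the same route as the paper: the same commutation sign $e^Ae^B=(-1)^{|A||B|+|A\cap B|}e^Be^A$ (the paper's $(-1)^{km-i}$) and the same parametrization of the $B$'s by $\binom{k}{j}\binom{n-k}{m}$, with the hypothesis $1\le|A|\le n-1$ entering in exactly the same place. The only difference is presentational: you evaluate the counts by forming the four signed sums $P_1,\dots,P_4$ and inverting a $4\times4$ system, whereas the paper directly factors the parity-restricted double sums into $\bigl(\sum_{j\ \mathrm{even}}C_k^j\bigr)\bigl(\sum_{l\ \mathrm{even}}C_{n-k}^l\bigr)=2^{k-1}2^{n-k-1}$ --- the same roots-of-unity filter in different clothing.
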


\begin{proof} The cases $k=0$ and $k=n$ are trivial (see Theorem \ref{theoremCentr}).

Let us fix one multi-index $A$ of the length $k$. Then there are $C^i_k C_{n-k}^{m-i}$ different multi-indices of the fix length $m$ that have fixed number $i$ coincident indices with multi-index $A$. Here $C_n^k=\left(\begin{array}{l} n  \\ k  \end{array}\right)=\frac{n!}{k!(n-k)!}$ is binomial coefficient (we have $C_n^k=0$ for $k>n$). Note, that $\sum_{i=0}^nC^i_k C_{n-k}^{m-i}=C_n^m$ (Vandermonde's convolution) - the full number of ordered multi-indices of the length $m$. When we swap basis element with multi-index $A$ of the length $k$ with another basis element with multi-index of the length $m$, then we obtain coefficient $(-1)^{km-i}$, where $i$ is the number of coincident indices in these 2 multi-indices, i.e. $e^{a_1 \ldots a_k}e^{b_1 \ldots b_m}=(-1)^{km-i}e^{b_1 \ldots b_m}e^{a_1 \ldots a_k}$.

If $k$ is even and does not equal to $0$ and $n$, then the number of even and odd elements $e^{b_1 \ldots b_m}$ that commute (in this case coefficient $km-i$ must be even, and so $i$ is even) with fixed $e^{a_1 \ldots a_k}$ respectively equals
$$\sum_{m-even}\sum_{i - even}C^i_k C_{n-k}^{m-i}=2^{n-2},\qquad \sum_{m-odd}\sum_{i - even}C^i_k C_{n-k}^{m-i}=2^{n-2}.$$
If $k$ is odd and does not equal to $n$, then the number of even and odd elements $e^{b_1 \ldots b_m}$ that anticommute (in this case $km-i$ is odd, and so $m-i$ is even) with $e^{a_1 \ldots a_k}$ respectively equals
$$\sum_{m - even} \sum_{i - even}C^i_k C_{n-k}^{m-i}=2^{n-2},\qquad \sum_{m - odd} \sum_{i - odd}C^i_k C_{n-k}^{m-i}=2^{n-2}.$$

We can prove the last 4 identities if we regroup summands. For example, we have
$$\sum_{m-even}\sum_{i - even}C^i_k C_{n-k}^{m-i}=(\sum_{j= even}C_k^j)(\sum_{l-even}C_{n-k}^l)=2^{k-1}2^{n-k-1}=2^{n-2}.$$
\end{proof}

Also we have the following theorem about adjoint sets of multi-indices.

\begin{thm}\label{theoremCommBas3} Consider real Clifford algebra $\cl(p,q)$, $p+q=n$ and the set of basis elements $\mathfrak{B}=\{e^A, A\in\I\}$. Suppose we have a partition $\I=\I_{\Adj} \cup \widetilde{\I_{\Adj}}$.

If $n$ is even then any even (not odd!) basis element (if it is not $e$) commutes with $2^{n-2}$ basis elements from $\{e^A \,|\, A\in\I_{\Adj}\}$, anticommutes  with $2^{n-2}$ basis elements from $\{e^A \,|\, A\in\I_{\Adj}\}$, commutes with $2^{n-2}$ basis elements from $\{e^A \,|\, A\in \widetilde{\I_{\Adj}}\}$ and anticommutes with $2^{n-2}$ basis elements from $\{e^A \,|\, A\in \widetilde{\I_{\Adj}}\}$.

If $n$ is odd then any basis element (if it is neither $e$ nor $e^{1\ldots n}$) commutes with $2^{n-2}$ basis elements from $\{e^A \,|\, A\in\I_{\Adj}\}$, anticommutes  with $2^{n-2}$ basis elements from $\{e^A \,|\, A\in\I_{\Adj}\}$, commutes with $2^{n-2}$ basis elements from $\{e^A \,|\, A\in \widetilde{\I_{\Adj}}\}$ and anticommutes with $2^{n-2}$ basis elements from $\{e^A \,|\, A\in \widetilde{\I_{\Adj}}\}$.
\end{thm}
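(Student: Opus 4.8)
The plan is to reduce the whole statement to a single sign computation: how the commutator sign of a fixed basis element $e^C$ with $e^B$ changes when $B$ is replaced by its adjoint multi-index $\widetilde{B}$. Once this is understood, the theorem follows from Theorem \ref{theoremCommBas2} by a counting argument, because the map $B\mapsto\widetilde{B}$ is a bijection of $\I_{\Adj}$ onto $\widetilde{\I_{\Adj}}$ (adjunction is a fixed-point-free involution on $\I$, so a valid partition pairs each element of $\I_{\Adj}$ with its adjoint in $\widetilde{\I_{\Adj}}$).

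First I would recall the sign rule from the proof of Theorem \ref{theoremCommBas2}: if $|C|=k$, $|B|=m$, and $C$ and $B$ have $i$ common indices, then $e^C e^B=(-1)^{km-i}e^B e^C$. I then compute the corresponding data for $\widetilde{B}$. Since $\widetilde{B}$ is the complement of $B$ in $1\ldots n$, its length is $n-m$, and the number of indices common to $C$ and $\widetilde{B}$ equals $|C\setminus B|=k-i$. Hence $e^C e^{\widetilde{B}}=(-1)^{k(n-m)-(k-i)}e^{\widetilde{B}}e^C$, and a direct simplification gives
\begin{eqnarray}
(-1)^{k(n-m)-(k-i)}=(-1)^{k(n-1)}(-1)^{km-i}.\nonumber
\end{eqnarray}
Thus the commutator sign of $e^C$ with $e^{\widetilde{B}}$ equals $(-1)^{k(n-1)}$ times its commutator sign with $e^B$. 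This single relation is the heart of the argument.

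Now I would split into cases. If $n$ is odd, then $n-1$ is even and $(-1)^{k(n-1)}=1$ for every $k$; if $n$ is even and $C$ is even, then $k$ is even and again $(-1)^{k(n-1)}=(-1)^k=1$. In both situations $e^C$ commutes with $e^B$ if and only if it commutes with $e^{\widetilde{B}}$, and likewise for anticommutation. Since $B\mapsto\widetilde{B}$ maps $\I_{\Adj}$ bijectively onto $\widetilde{\I_{\Adj}}$, it carries the elements of $\I_{\Adj}$ commuting with $e^C$ bijectively onto the elements of $\widetilde{\I_{\Adj}}$ commuting with $e^C$, and similarly for anticommuting elements. Therefore the commuting count in $\I_{\Adj}$ equals the commuting count in $\widetilde{\I_{\Adj}}$, and the same for anticommutation. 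By Theorem \ref{theoremCommBas2} the total number of basis elements commuting with $e^C$ is $2^{n-2}+2^{n-2}=2^{n-1}$ (and likewise $2^{n-1}$ anticommute). Splitting each total equally between the two adjoint sets yields exactly $2^{n-2}$ in each, which is the claim.

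The step needing the most care is the bookkeeping around the excluded elements. For $n$ even and $C$ odd the factor is $(-1)^k=-1$, so commutation with $e^B$ becomes anticommutation with $e^{\widetilde{B}}$; the bijection then mixes the two counts and the even split can fail, which is precisely why odd elements are excluded in the even case. I would also verify that the totals from Theorem \ref{theoremCommBas2} are genuinely $2^{n-1}$ for every non-excluded $C$ — in particular for the pseudoscalar $e^{1\ldots n}$ when $n$ is even, where Theorem \ref{theoremCommBas2} records $2^{n-1}$ commuting and $2^{n-1}$ anticommuting elements — so that halving produces $2^{n-2}$ in all remaining cases and the pseudoscalar is covered.
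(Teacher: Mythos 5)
Your proof is correct and follows essentially the same route as the paper's: show that replacing $B$ by its adjoint $\widetilde{B}$ preserves the commutation sign with the fixed basis element under the stated hypotheses, then halve the $2^{n-1}$/$2^{n-1}$ counts of Theorem \ref{theoremCommBas2} via the bijection $\I_{\Adj}\to\widetilde{\I_{\Adj}}$. The only difference is that you establish the key invariance by an explicit index/sign count, whereas the paper gets it in one line from $e^{\widetilde{B}}=\pm e^{B}e^{1\ldots n}$ together with the fact that $e^{1\ldots n}$ is central (odd $n$) or commutes with every even basis element (even $n$); your version has the small added benefit of making explicit why odd elements must be excluded when $n$ is even and of checking the pseudoscalar case.
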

Note that in the case of odd $n$ we can take $\I_{\Adj}=\I_{\Even}$, $\widetilde{\I_{\Adj}}=\I_{\Odd}$ and obtain the statement from the Theorem \ref{theoremCommBas2}.

\begin{proof} If $n$ is odd then $e^{1\ldots n}$ is in the center of Clifford algebra. So if basis element commutes with some basis element, then it commutes with adjoint basis element. But we know from Theorem \ref{theoremCommBas2} that basis elements (except $e$ and $e^{1\ldots n}$) commutes with $2^{n-1}$ basis elements and anticommutes with $2^{n-1}$ basis elements. So we obtain the statement of theorem for the case of odd $n$.

If $n$ is even then even (not odd) basis element commutes with $e^{1\ldots n}$. So if even basis element commutes with some basis element, then it commutes with adjoint basis element.
\end{proof}

Let us represent the commutative property of basis elements in the following tables. At the intersection of two basis elements is a sign ``+'' if they commute and the sign ``--'' if they anticommute. For small dimensions we have the following tables:

\begin{center}
\begin{tabular}{|c||c|c|}
\hline{{\vrule width0pt height15pt}}
$n=1$  & $e$ & $e^1$ \\ \hline \hline
$e$ & + & +  \\ \hline
$e^1$ & + & +  \\ \hline
\end{tabular}\qquad
\qquad
\begin{tabular}{|c||c|cc|c|}
\hline{{\vrule width0pt height15pt}}
$n=2$  & $e$ & $e^1$ & $e^2$ & $e^{12}$ \\ \hline \hline
$e$ & + &  +  & + & +\\ \hline
$e^1$ & + & + & --& -- \\
$e^2$ & + & --&  +& -- \\ \hline
$e^{12}$ & +  & -- & -- & + \\ \hline
\end{tabular}
\end{center}

\begin{center}
\begin{tabular}{|c||c|ccc|ccc|c|}
\hline{{\vrule width0pt height15pt }}
$n=3$  & $e$ & $e^1$ & $e^2$ & $e^{3}$ & $e^{12}$ & $e^{13}$ & $e^{23}$ & $e^{123}$ \\ \hline \hline
$e$      & + & + &  + & + & + & + & + & + \\ \hline
$e^1$    & + & + &  -- & -- & -- & -- & + & + \\
$e^2$    & + & --  & + & -- & --  &+ & -- & + \\
$e^3$    & + & --  & -- & + & + & -- & -- & + \\ \hline
$e^{12}$ & + & -- & -- & + & + & -- & -- & + \\
$e^{13}$ & + & -- &  + & -- & -- & + & -- & + \\
$e^{23}$ & + & +  & -- & -- & -- & -- & + & + \\ \hline
$e^{123}$& + & +  & + & + & + & + & + & + \\ \hline
\end{tabular}
\end{center}


Consider the following operator
$$F_{\Adj}(U)=\frac{1}{2^{n-1}}\sum_{A\in\I_{\Adj}}e_A U e^A.$$

\begin{thm}\label{theoremSvDual} Consider an arbitrary Clifford algebra element $U$. Suppose we have a partition $\I=\I_{\Adj} \cup \widetilde{\I_{\Adj}}$.
In the case of arbitrary $n$ we have
$$F_{\Adj}(U)=F(U).$$
\end{thm}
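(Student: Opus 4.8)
The plan is to reduce everything to the action of both operators on a single basis element $e^B$ and then to a counting identity about commuting and anticommuting basis elements. First I would use that conjugation by a basis element only changes the sign of another basis element: if $e^A$ and $e^B$ commute then $e_A e^B e^A = e^B$, and if they anticommute then $e_A e^B e^A = -e^B$. Writing $\tau_{A,B}=+1$ in the first case and $\tau_{A,B}=-1$ in the second, both operators act diagonally in the basis $\mathfrak{B}$, with $F(e^B)=\bigl(\frac{1}{2^n}\sum_{A\in\I}\tau_{A,B}\bigr)e^B$ and $F_{\Adj}(e^B)=\bigl(\frac{1}{2^{n-1}}\sum_{A\in\I_{\Adj}}\tau_{A,B}\bigr)e^B$. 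By linearity it then suffices to prove, for every $B\in\I$, the scalar identity
\[
\tfrac{1}{2^{n-1}}\sum_{A\in\I_{\Adj}}\tau_{A,B}=\tfrac{1}{2^{n}}\sum_{A\in\I}\tau_{A,B},
\]
which, splitting $\I=\I_{\Adj}\cup\widetilde{\I_{\Adj}}$, is equivalent to the balance condition $\sum_{A\in\I_{\Adj}}\tau_{A,B}=\sum_{A\in\widetilde{\I_{\Adj}}}\tau_{A,B}$.

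Next I would dispose of the central basis elements. For $e^B=e$ (and, when $n$ is odd, also for $e^B=e^{1\ldots n}$) every $\tau_{A,B}$ equals $+1$, so both partial sums equal $2^{n-1}$ and the balance condition holds trivially; here $F_{\Adj}(e^B)=e^B=F(e^B)$, in agreement with Theorem \ref{theoremSvPoln} and Theorem \ref{theoremCentr}. For the remaining, non-central basis elements I would invoke Theorem \ref{theoremCommBas3}: such an $e^B$ commutes with exactly $2^{n-2}$ and anticommutes with exactly $2^{n-2}$ of the elements $\{e^A : A\in\I_{\Adj}\}$, and likewise for $\{e^A : A\in\widetilde{\I_{\Adj}}\}$. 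Hence $\sum_{A\in\I_{\Adj}}\tau_{A,B}=2^{n-2}-2^{n-2}=0$ and the same for $\widetilde{\I_{\Adj}}$, so $F_{\Adj}(e^B)=0=F(e^B)$, again consistent with Theorem \ref{theoremSvPoln}.

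The hard part will be the case of \emph{odd} basis elements in \emph{even} dimension $n$, because Theorem \ref{theoremCommBas3} supplies the clean $2^{n-2}$/$2^{n-2}$ split only for even basis elements when $n$ is even, and for odd $e^B$ the partial sum $\sum_{A\in\I_{\Adj}}\tau_{A,B}$ need not vanish on its own. To attack this I would exploit the adjoint symmetry through the pseudoscalar: from $e^A e^{\widetilde A}=\pm e^{1\ldots n}$ one obtains $e_{\widetilde A}\,U\,e^{\widetilde A}=(e^{1\ldots n})^{-1}(e_A U e^A)\,e^{1\ldots n}$, and since $e^{1\ldots n}$ anticommutes with every odd element when $n$ is even, conjugation by it relates $\tau_{\widetilde A,B}$ to $\tau_{A,B}$ through the commutation sign of $e^{1\ldots n}$ with $e^B$. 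Combining this adjoint relation with the global count from Theorem \ref{theoremCommBas2} (each non-central $e^B$ commutes with $2^{n-1}$ and anticommutes with $2^{n-1}$ elements of $\mathfrak{B}$) is the mechanism that should force the two partial sums over $\I_{\Adj}$ and $\widetilde{\I_{\Adj}}$ to coincide. I expect this matching of the two halves of the partition on odd elements in even dimension to be the real crux of the argument, and it is the step that deserves the most careful treatment, since it is precisely where the structure of the chosen partition $\I=\I_{\Adj}\cup\widetilde{\I_{\Adj}}$ enters.
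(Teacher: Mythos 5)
Your reduction to the scalar balance condition $\sum_{A\in\I_{\Adj}}\tau_{A,B}=\sum_{A\in\widetilde{\I_{\Adj}}}\tau_{A,B}$ for each basis element $e^B$ is correct, as is your treatment of the central elements and of all the cases covered by Theorem \ref{theoremCommBas3}. You have also located the real difficulty exactly where it lies: odd basis elements $e^B$ when $n$ is even. But the mechanism you propose there does not close the gap, and in fact it cannot, because the statement fails in precisely that case. Write $P=\sum_{A\in\I_{\Adj}}\tau_{A,B}$ and $Q=\sum_{A\in\widetilde{\I_{\Adj}}}\tau_{A,B}$. Your pseudoscalar relation gives $\tau_{\widetilde A,B}=\tau_{A,B}\,\tau_{1\ldots n,B}=-\tau_{A,B}$ for odd $B$ and even $n$, hence $Q=-P$; the global count from Theorem \ref{theoremCommBas2} gives $P+Q=0$. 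These are the \emph{same} equation: together they only reproduce the known fact $F(e^B)=0$ and determine neither $P$ nor $Q$. What you need is $P=0$, and that is false in general. Concretely, take $n=2$, $\I_{\Adj}=\I_{\First}=\{-,1\}$ (one of the paper's own examples of an adjoint set) and $U=e^1$: then $F_{\Adj}(e^1)=\frac{1}{2}\big((e)^{-1}e^1e+(e^1)^{-1}e^1e^1\big)=\frac{1}{2}(e^1+e^1)=e^1$, whereas $F(e^1)=\pi_0(e^1)=0$ by Theorem \ref{theoremSvPoln}. So $F_{\Adj}\neq F$ on the odd subspace when $n$ is even, for this (and, one can check for $n=2$, for every) choice of adjoint partition.

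For what it is worth, the paper's own proof stumbles at the same spot: for even $n$ it derives the term-by-term identity $(e^A)^{-1}U_1e^A=-(e^{\widetilde A})^{-1}U_1e^{\widetilde A}$ for the odd part $U_1$ and then asserts that (\ref{ert}) follows ``again,'' but summing that identity over $A\in\I_{\Adj}$ yields only $\sum_{A\in\I}e_AU_1e^A=0$, not $\sum_{A\in\I}e_AU_1e^A=2\sum_{A\in\I_{\Adj}}e_AU_1e^A$. Both your argument and the paper's are sound for odd $n$ (where $e^{1\ldots n}$ is central, adjoint terms coincide pairwise, and the full sum genuinely doubles the half-sum) and, for even $n$, on the even subspace $\cl_{\Even}(p,q)$. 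The theorem should be restricted accordingly; your basis-by-basis bookkeeping is in fact the cleanest way to see both why it holds in those cases and why it breaks on odd elements in even dimension.
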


\begin{proof} If $n$ is odd, then $e^{1\ldots n}$ is in the center of Clifford algebra,\\
$(e^{a_1\ldots a_m})^{-1}Ue^{a_1 \ldots a_m}=e^{1\ldots n}(e^{1\ldots n})^{-1}(e^{a_1\ldots a_m})^{-1}Ue^{a_1 \ldots a_m}=(e^{\widetilde{a_1\ldots a_m}})^{-1}Ue^{\widetilde{a_1\ldots a_m}},$
and
\begin{eqnarray}
e_A U e^A = 2\sum_{A\in\I_{\Adj}}e_A U e^A.\label{ert}
\end{eqnarray}

If $n$ is even, then $e^{1\ldots n}$ anticommutes with all odd basis elements and commutes with all even basis elements (see Theorem \ref{theoremCommBas2}). So
if $U=U_0+U_1$, $U_0\in\cl_{\Even}(p,q)$, $U_1\in\cl_{\Odd}(p,q)$, then for $k=0, 1$ we have
$$(e^{a_1\ldots a_m})^{-1}U_k e^{a_1 \ldots a_m}=e^{1\ldots n}(e^{1\ldots n})^{-1}(e^{a_1\ldots a_m})^{-1}U_k e^{a_1 \ldots a_m}=$$
$$=(-1)^{2m+k}(e^{a_1\ldots a_m})^{-1}(e^{1\ldots n})^{-1}U_k e^{1\ldots n}e^{a_1 \ldots a_m}=(-1)^k (e^{\widetilde{a_1\ldots a_m}})^{-1}U_k e^{\widetilde{a_1\ldots a_m}},$$
and we obtain (\ref{ert}) again.
\end{proof}

So we can use operator $F_{\Adj}$ (with $2^{n-1}$ summands) instead of operator $F(U)$ (with $2^{n}$ summands) in all calculations.

\section{Conjugate action on Clifford algebras}

We denote the corresponding square symmetric matrices of size $2^n$ from the previous section (with elements $1$ and $-1$, see tables) by $M_n=||m_{AB}||$. For arbitrary element of these matrices we have\footnote{Note that $e^A e^B(e^A)^{-1} (e^B)^{-1}$ is the group commutator of $e^A$ and $e^B$ in Salingaros' vee group.}  $m_{AB}=e^A e^B(e^A)^{-1} (e^B)^{-1}$, $A, B\in\I$, $e\equiv1$.
We have\footnote{We can say that commutator subgroup of Salingaros' vee group is $\{1, -1\}$.}
\begin{eqnarray}
m_{AB}=m_{BA}=\left\lbrace\begin{array}{lll}
1, & \mbox{if $[e^A, e^B]=0$};\\
-1, & \mbox{if $\{e^A, e^B\}=0$}.
\end{array}
\right.
\end{eqnarray}

In the case of odd $n$ we also consider symmetric matrix $L$ of size $2^{n-1}$ $L_n=||l_{AB}||$, $l_{AB}=m_{AB}$, $A, B\in\I_{\First}=\{A\in\I,\quad |A|\leq\frac{n-1}{2}\}.$

\begin{thm}\label{obrat} Matrix $M_n$ is invertible in the case of even $n$ and $M_n^{-1}=\frac{1}{2^n} M_n$.
Matrix $M_n$ is not invertible in the case of odd $n$.

Matrix $L_n$ is invertible in the case of odd $n$ and $L_n^{-1}=\frac{1}{2^{n-1}}L_n$.
\end{thm}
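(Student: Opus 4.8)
The plan is to reduce everything to computing the squares $M_n^2$ and $L_n^2$: the stated inverses are just the assertions $M_n^2=2^nI$ (even $n$) and $L_n^2=2^{n-1}I$ (odd $n$), while the non-invertibility of $M_n$ for odd $n$ I would get by exhibiting two equal columns. The engine of the whole argument is that the entries $m_{AB}$ form a \emph{bicharacter}: I claim $m_{AB}\,m_{BC}=m_{BD}$, where $e^A e^C=\pm\,e^{D}$, so that the multi-index $D$ is the symmetric difference of $A$ and $C$. To verify this, conjugate the product $e^A e^C$ by $e^B$. Since $(e^B)^{-1}e^A e^B=m_{AB}e^A$ and $(e^B)^{-1}e^C e^B=m_{BC}e^C$, one gets $(e^B)^{-1}e^A e^C e^B=m_{AB}m_{BC}\,e^A e^C$; on the other hand, writing $e^A e^C=s\,e^{D}$ with $s=\pm1$ and conjugating directly gives $(e^B)^{-1}e^A e^C e^B=m_{BD}\,e^A e^C$, and comparing the two expressions (and cancelling $e^A e^C$) yields the identity.

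With the bicharacter identity in hand I would compute $(M_n^2)_{AC}=\sum_{B\in\I}m_{AB}m_{BC}=\sum_{B\in\I}m_{BD}$, so that only the column sums of $M_n$ are needed. These I read off from the Reynolds operator $F$ of Theorem~\ref{theoremSvPoln}: since $(e^B)^{-1}e^D e^B=m_{BD}e^D$, we have $F(e^D)=\bigl(\tfrac{1}{2^n}\sum_{B}m_{BD}\bigr)e^D$, and Theorem~\ref{theoremSvPoln} forces this to equal $e^D$ when $e^D$ lies in the center and $0$ otherwise. Hence $\sum_B m_{BD}=2^n$ if $e^D\in{\rm Cen}(\cl(p,q))$ and $\sum_B m_{BD}=0$ otherwise. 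For even $n$ the center is $\cl_0(p,q)$, so $e^D$ is central exactly when $D$ is the empty multi-index, i.e. $A=C$; thus $(M_n^2)_{AC}=2^n\delta_{AC}$, so $M_n^2=2^nI$ and $M_n^{-1}=\tfrac{1}{2^n}M_n$.

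For odd $n$ the element $e^{1\ldots n}$ is central, so by Theorem~\ref{theoremSvDual} (conjugation by $e^C$ and by $e^{\widetilde C}$ coincide) one has $m_{AC}=m_{A\widetilde C}$ for every $A$; hence the columns of $M_n$ indexed by $C$ and by its adjoint $\widetilde C$ are equal, $M_n$ has two coinciding columns, and is therefore singular. For $L_n$ I would reuse the same column-sum computation over the full index set: $\sum_{B\in\I}m_{AB}m_{BC}=\sum_{B\in\I}m_{BD}$ equals $2^n$ precisely when $e^D$ is central, i.e. when $D$ is empty ($A=C$) or $D=1\ldots n$ ($C=\widetilde A$). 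Pairing each $B\in\I_{\First}$ with $\widetilde B\in\I_{\Last}$ and using $m_{A\widetilde B}m_{\widetilde B C}=m_{AB}m_{BC}$ gives $\sum_{B\in\I_{\First}}m_{AB}m_{BC}=\tfrac12\sum_{B\in\I}m_{AB}m_{BC}$; and for $A,C\in\I_{\First}$ the case $C=\widetilde A$ cannot occur, since $\widetilde A\in\I_{\Last}$. Thus the right-hand side is $2^n\delta_{AC}$, whence $(L_n^2)_{AC}=2^{n-1}\delta_{AC}$, so $L_n^2=2^{n-1}I$ and $L_n^{-1}=\tfrac{1}{2^{n-1}}L_n$.

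The one step that needs genuine care is the bicharacter identity $m_{AB}m_{BC}=m_{BD}$, because it encodes the fact that commutation of basis elements is bilinear in the multi-indices (equivalently, that the commutator subgroup of Salingaros' vee group is $\{\pm1\}$); once this is established, the remaining work is bookkeeping built on Theorems~\ref{theoremSvPoln} and~\ref{theoremSvDual}.
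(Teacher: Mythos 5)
Your proposal is correct and follows essentially the same route as the paper: both reduce the claim to showing $M_n^2=2^nI$ (resp.\ $L_n^2=2^{n-1}I$) by evaluating $\sum_B m_{AB}m_{BC}$ through the Reynolds operator of Theorem~\ref{theoremSvPoln} applied to $(e^A)^{-1}e^C=\pm e^D$. Your bicharacter identity is just an entrywise restatement of the paper's manipulation $\sum_B m_{AB}m_{BC}=e^A\bigl(\sum_B e^B(e^A)^{-1}e^C(e^B)^{-1}\bigr)(e^C)^{-1}$, and you merely make explicit two points the paper leaves implicit: the equal columns $C$ and $\widetilde{C}$ that force singularity of $M_n$ for odd $n$, and the halving of the sum when restricting to $\I_{\First}$.
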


\begin{proof} Matrices are symmetric $M_n^T=M_n$, $L_n^T=L_n$ by definition. Let us multiply matrix $M_n$ by itself. For two arbitrary rows we have
$$\sum_B m_{AB}m_{BC}=\sum_B e^A e^B(e^A)^{-1} (e^B)^{-1}e^B e^C(e^B)^{-1} (e^C)^{-1}=$$
$$=e^A \left(\sum_Be^B(e^A)^{-1} e^C(e^B)^{-1}\right)(e^C)^{-1}.$$
In the last expression sum is equal to zero if $A\neq C$ and (in the case of odd $n$) $A, C$ are not adjoint multi-indices, because $e_B U e^B$ is projection onto the center of Clifford algebra (see Theorem \ref{theoremSvPoln}). In other cases the last expression equals to $2^n$. In the case of odd $n$ we must use matrix $L_n$ because we do not have adjoint multi-indices in this matrix.
\end{proof}

Let us consider the following operators (for different $A\in\I$)
$$F_{e^A}: \cl(p,q)\to\cl(p,q),\qquad U\to (e^A)^{-1}Ue^A,\qquad U\in\cl(p,q).$$
Note that $F_{e^A}(U)=(e^A)^{-1} U e^A$ is a {\it conjugation} of Clifford algebra element $U\in\cl(p,q)$ by element $e^A$ of Salingaros' vee group.

\begin{thm}\label{theoremSimp1} For operator $F_{e^A}(U)=(e^A)^{-1} U e^A$ we have
\begin{eqnarray}
F_{e^A}(U)=\sum_{B}m_{AB} \pi_{e^B}(U),\label{odsv}
\end{eqnarray}
where $\pi_{e^B}$ is a projection\footnote{We do not use notation $\pi_B$ instead of $\pi_{e^B}$ because it will conflict with notation $\pi_k$ for projection onto subspace $\cl_k(p,q)$.} onto subspace spanned over element $e^B$.
We have $F_{e^A}(F_{e^A}(U))=U$.
\end{thm}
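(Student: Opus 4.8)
The plan is to reduce everything to the action of conjugation on a single basis element and then extend by linearity. Writing $U = u_B e^B$ as in (\ref{U}), the map $F_{e^A}$ is $\R$-linear, so it suffices to compute $(e^A)^{-1} e^B e^A$ for each multi-index $B\in\I$ and then reassemble. Since $\pi_{e^B}(U) = u_B e^B$ picks out the $e^B$-component, the claimed formula (\ref{odsv}) will follow as soon as I show that conjugation acts diagonally on the basis, namely $(e^A)^{-1} e^B e^A = m_{AB} e^B$.

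The key step is this diagonalization. By definition $m_{AB} = e^A e^B (e^A)^{-1} (e^B)^{-1}$, and because any two basis elements either commute or anticommute, $m_{AB}\in\{1,-1\}$: if they commute the four factors telescope to $e$, and if they anticommute a single extra sign survives. The same relation rearranges to $e^A e^B = m_{AB}\, e^B e^A$; multiplying on the left by $(e^A)^{-1}$ gives $e^B = m_{AB}\,(e^A)^{-1} e^B e^A$, hence $(e^A)^{-1} e^B e^A = m_{AB}^{-1} e^B = m_{AB} e^B$, using $m_{AB}^2 = 1$ so that $m_{AB}^{-1}=m_{AB}$. Substituting into $F_{e^A}(U) = u_B (e^A)^{-1} e^B e^A$ and summing over $B$ yields $F_{e^A}(U) = \sum_B m_{AB} u_B e^B = \sum_B m_{AB} \pi_{e^B}(U)$.

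For the involution $F_{e^A}(F_{e^A}(U)) = U$ there are two equally short routes. The direct one observes that $(e^A)^2 = \pm e$ is a scalar multiple of the identity, hence central, so $F_{e^A}(F_{e^A}(U)) = (e^A)^{-2} U (e^A)^2 = U$. Alternatively, applying the formula just proved twice and again using $m_{AB}^2 = 1$ gives $F_{e^A}(F_{e^A}(U)) = \sum_B m_{AB}^2 u_B e^B = \sum_B u_B e^B = U$.

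I do not expect a genuine obstacle here; the only point that needs care is the clean derivation of $(e^A)^{-1} e^B e^A = m_{AB} e^B$, i.e. recognizing that conjugation by a basis element acts as a diagonal operator on the basis $\mathfrak{B}$ with eigenvalues $\pm 1$ recorded exactly by row $A$ of the matrix $M_n$. Once that is in place, both assertions are immediate.
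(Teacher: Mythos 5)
Your proof is correct and is essentially the paper's argument: the paper simply states that the result "follows from the definition of $M_n$ and of conjugation," and your computation $(e^A)^{-1}e^Be^A = m_{AB}e^B$ together with linearity is exactly the content being invoked. The involution property via $(e^A)^2=\pm e$ being central (or via $m_{AB}^2=1$) is likewise the intended one-line verification.
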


\begin{proof} The statement follows from the definition of matrix $M_n=||m_{AB}||$ and definition of conjugation. \end{proof}

Fixed multi-index $A$ divides the set $\I$ into 2 sets $\I=\I_{[A]}\cup \I_{\{A\}},$ where $e^B$, $B\in\I_{[A]}$ commute with $e^A$, and $e^B$, $B\in\I_{\{A\}}$ anticommute with $e^A$. Denote the corresponding subspaces of Clifford algebra by $\cl_{[A]}(p,q)$ and $\cl_{\{A\}}(p,q)$ and corresponding projection operators by $\pi_{[A]}$ and $\pi_{\{A\}}$. We have $\cl(p,q)=\cl_{[A]}(p,q)\oplus\cl_{\{A\}}(p,q)$ and
$$F_{e^A}(U)=(e^A)^{-1} U e^A=\pi_{[A]}(U)-\pi_{\{A\}}(U), \qquad \forall A.$$

\begin{thm} For arbitrary Clifford algebra element $U$ we have
$$\pi_{[A]}(U)=\frac{1}{2}(U+(e^A)^{-1 }U e^A),\qquad \pi_{\{A\}}(U)=\frac{1}{2}(U-(e^A)^{-1 }U e^A).$$
\end{thm}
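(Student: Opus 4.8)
The plan is to solve for the two projections as the unique solution of a pair of linear relations that are already in hand from the paragraph immediately preceding the statement. Recall two facts established there. First, since $\cl(p,q)=\cl_{[A]}(p,q)\oplus\cl_{\{A\}}(p,q)$ and $\pi_{[A]},\pi_{\{A\}}$ are the associated projectors, every $U$ decomposes as $U=\pi_{[A]}(U)+\pi_{\{A\}}(U)$. Second, the conjugation formula reads $F_{e^A}(U)=(e^A)^{-1}Ue^A=\pi_{[A]}(U)-\pi_{\{A\}}(U)$, which itself follows from the sign rule $(e^A)^{-1}e^B e^A=m_{AB}e^B$ applied termwise, giving $+e^B$ on the commuting part $\cl_{[A]}(p,q)$ and $-e^B$ on the anticommuting part $\cl_{\{A\}}(p,q)$.

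First I would write these two identities as a display and then add them: the terms $\pi_{\{A\}}(U)$ cancel, yielding $U+(e^A)^{-1}Ue^A=2\pi_{[A]}(U)$, hence $\pi_{[A]}(U)=\tfrac{1}{2}(U+(e^A)^{-1}Ue^A)$. Subtracting the second identity from the first, the terms $\pi_{[A]}(U)$ cancel and $U-(e^A)^{-1}Ue^A=2\pi_{\{A\}}(U)$, giving $\pi_{\{A\}}(U)=\tfrac{1}{2}(U-(e^A)^{-1}Ue^A)$. Both asserted formulas then drop out at once.

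There is essentially no obstacle here: the content of the statement is just that the eigenspace projectors for the involution $F_{e^A}$ (which satisfies $F_{e^A}\circ F_{e^A}=\mathrm{id}$ by Theorem~\ref{theoremSimp1}) are the familiar $\tfrac{1}{2}(\mathrm{id}\pm F_{e^A})$, with $\cl_{[A]}(p,q)$ and $\cl_{\{A\}}(p,q)$ being the $(+1)$- and $(-1)$-eigenspaces. The only point worth flagging is that one must invoke the direct-sum decomposition $\cl(p,q)=\cl_{[A]}(p,q)\oplus\cl_{\{A\}}(p,q)$ to guarantee that $\pi_{[A]}$ and $\pi_{\{A\}}$ are well defined and sum to the identity on $U$; once that is granted, the whole argument is the two-line add-and-subtract computation above, and no case distinction on the parity of $n$ or on $A$ is needed.
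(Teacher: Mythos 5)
Your proposal is correct and follows exactly the paper's own argument: the paper likewise combines the two identities $U=\pi_{[A]}(U)+\pi_{\{A\}}(U)$ and $(e^A)^{-1}Ue^A=\pi_{[A]}(U)-\pi_{\{A\}}(U)$ and solves for the projectors by adding and subtracting. You merely spell out the justification of the second identity and the eigenspace interpretation in slightly more detail than the paper does.
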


\begin{proof} Using
$$(e^A)^{-1 }U e^A=\pi_{[A]}(U)-\pi_{\{A\}}(U),\qquad U=\pi_{[A]}(U)+\pi_{\{A\}}(U)$$
we obtain the statement of theorem.
\end{proof}

For empty multi-index $A=-$ we have $m_{-,B}=1$ for all $B$, $\I=\I_{[A]}$, $\I_{\{A\}}=\o$.
For multi-index $A=1 \ldots n$ we have $m_{1\ldots n, B}=1$ for all $B$ in the case of odd $n$ and
\begin{eqnarray}
m_{1\ldots n, B}=\left\lbrace\begin{array}{lll}
1, & \mbox{if $B$ is even};\\
-1, & \mbox{if $B$ is odd},
\end{array}
\right.
\end{eqnarray}
and $e_{1\ldots n}Ue^{1\ldots n}=\pi_{\Even}(U)-\pi_{\Odd}(U)$ in the case of even $n$, where $\pi_{\Even}$ and $\pi_{\Odd}$ are projection operations onto the even and odd subspaces of Clifford algebra. In other cases (when $A$ is not empty and in not $1\ldots n$) we have $2^{n-1}$ elements in each of the sets
$\I_{[A]}$, $\I_{\{A\}}$ (see Theorem \ref{theoremCommBas2}) i.e. we have
$\dim\cl_{[A]}(p,q)=\dim\cl_{\{A\}}(p,q)=2^{n-1}$ in these cases.

In particular case we obtain the following identities (for $A=1\ldots n$):
in the case of even $n$ we have
$$\pi_{\Even}(U)=\frac{1}{2}(U+e_{1\ldots n}Ue^{1\ldots n}),\qquad \pi_{\Odd}(U)=\frac{1}{2}(U-e_{1\ldots n}Ue^{1\ldots n}).$$
We have the following theorem.

\begin{thm}
Let an element $X\in \cl(p,q)$ satisfy the following equation with some given element\footnote{Note that $A$ in $Q^A$ is a label (multi-index) that serves to pair off an arbitrary Clifford element $Q^A$ with basis element $e^A$.} $Q^A\in\cl(p,q)$
\begin{eqnarray}
e^A X +\epsilon X e^A=Q^A,\qquad \epsilon\in\R^\times.
\end{eqnarray}
If $\epsilon\neq \pm 1$, then we have a unique solution
$$X=\sum_B \frac{1}{1+\epsilon m_{AB}}\pi_{e^B}((e^A)^{-1} Q^A).$$
If $\epsilon=-1$ (commutator case), then:
\begin{itemize}
  \item if $\pi_{[A]}((e^A)^{-1}Q^A)\neq 0$ (i.e. $\{e^A, Q^A\}\neq 0$), then there is no solution;
  \item if $\pi_{[A]}((e^A)^{-1}Q^A)=0$ (i.e. $\{e^A, Q^A\}=0$), then the solution is
  $$\frac{1}{2} \pi_{\{A\}}((e^A)^{-1} Q^A)+\pi_{[A]}(U),$$
  where $U$ is an arbitrary Clifford algebra element.
\end{itemize}
If $\epsilon=1$ (anticommutator case), then:
\begin{itemize}
  \item if $\pi_{\{A\}}((e^A)^{-1}Q^A)\neq 0$ (i.e. $[e^A, Q^A]\neq 0$), then there is no solution;
  \item if $\pi_{\{A\}}((e^A)^{-1}Q^A)=0$ (i.e. $[e^A, Q^A]= 0$), then the solution is
  $$\frac{1}{2} \pi_{[A]}((e^A)^{-1} Q^A)+\pi_{\{A\}}(U),$$
  where $U$ is an arbitrary Clifford algebra element.
\end{itemize}
\end{thm}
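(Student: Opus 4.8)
The plan is to reduce the two-sided equation to a single linear equation involving only the conjugation operator $F_{e^A}$, and then to diagonalize that operator using the decomposition $\cl(p,q)=\cl_{[A]}(p,q)\oplus\cl_{\{A\}}(p,q)$ established above, on whose summands $F_{e^A}$ acts as $+1$ and $-1$ respectively.

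First I would multiply the equation $e^A X+\epsilon Xe^A=Q^A$ on the left by $e_A=(e^A)^{-1}$. Since $e_A e^A=e$, this yields $X+\epsilon F_{e^A}(X)=(e^A)^{-1}Q^A$, where $F_{e^A}(X)=(e^A)^{-1}Xe^A$. By the preceding theorem $F_{e^A}(X)=\pi_{[A]}(X)-\pi_{\{A\}}(X)$, so substituting $X=\pi_{[A]}(X)+\pi_{\{A\}}(X)$ turns the equation into
$$(1+\epsilon)\pi_{[A]}(X)+(1-\epsilon)\pi_{\{A\}}(X)=(e^A)^{-1}Q^A.$$
I would then apply the complementary projections $\pi_{[A]}$ and $\pi_{\{A\}}$ to both sides. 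Because these project onto complementary $F_{e^A}$-invariant subspaces and are idempotent with $\pi_{[A]}\pi_{\{A\}}=0$, the single equation splits into two independent ones,
$$(1+\epsilon)\pi_{[A]}(X)=\pi_{[A]}((e^A)^{-1}Q^A),\qquad (1-\epsilon)\pi_{\{A\}}(X)=\pi_{\{A\}}((e^A)^{-1}Q^A).$$

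The three cases then fall out by inspecting whether the scalar factors $1+\epsilon$ and $1-\epsilon$ vanish. If $\epsilon\neq\pm1$, both factors are nonzero, each component is determined uniquely, and collecting $\pi_{[A]}=\sum_{m_{AB}=1}\pi_{e^B}$ and $\pi_{\{A\}}=\sum_{m_{AB}=-1}\pi_{e^B}$ together with the observation that $1+\epsilon m_{AB}$ equals $1+\epsilon$ or $1-\epsilon$ according as $m_{AB}=1$ or $m_{AB}=-1$ recovers the stated sum $X=\sum_B\frac{1}{1+\epsilon m_{AB}}\pi_{e^B}((e^A)^{-1}Q^A)$. If $\epsilon=-1$ the factor $1+\epsilon$ vanishes, forcing the solvability condition $\pi_{[A]}((e^A)^{-1}Q^A)=0$ and leaving $\pi_{[A]}(X)$ free, while $\pi_{\{A\}}(X)=\tfrac12\pi_{\{A\}}((e^A)^{-1}Q^A)$; the case $\epsilon=1$ is symmetric with the roles of $[A]$ and $\{A\}$ interchanged.

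The only point needing a short separate check — and the closest thing to an obstacle — is translating these abstract solvability conditions into the commutator/anticommutator form in the statement. For this I would verify that $\pi_{[A]}((e^A)^{-1}Q^A)=0$ is equivalent to $(e^A)^{-1}Q^A\in\cl_{\{A\}}(p,q)$, i.e. to $e^A\big((e^A)^{-1}Q^A\big)+\big((e^A)^{-1}Q^A\big)e^A=0$, which upon using $e^A(e^A)^{-1}=e$ collapses to $\{e^A,Q^A\}=0$; the analogous computation gives $\pi_{\{A\}}((e^A)^{-1}Q^A)=0\iff[e^A,Q^A]=0$. Everything else is routine linear algebra on the two eigenspaces of the involution $F_{e^A}$.
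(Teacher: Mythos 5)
Your proposal is correct and follows essentially the same route as the paper: left-multiply by $e_A$, use the fact that conjugation by $e^A$ acts as $+1$ on $\cl_{[A]}(p,q)$ and $-1$ on $\cl_{\{A\}}(p,q)$ (equivalently, as $m_{AB}$ on each $\pi_{e^B}$ component), and solve componentwise, with the cases $\epsilon=\pm1$ arising exactly from the vanishing of $1+\epsilon$ or $1-\epsilon$. The only cosmetic difference is that you work with the coarse two-eigenspace splitting throughout and refine to the $\pi_{e^B}$ sum at the end, whereas the paper starts from the fine decomposition; your translation of the solvability conditions into $\{e^A,Q^A\}=0$ and $[e^A,Q^A]=0$ matches the paper's remark that $\pi_{[A]}((e^A)^{-1}Q^A)=\pm\tfrac12\{e^A,Q^A\}$.
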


\begin{proof}
Multiply equation on the left by $(e^A)^{-1}=e_A$ and use Theorem \ref{theoremSimp1}:
$$X+\epsilon (e^A)^{-1} X e^A=(e^A)^{-1} Q^A \quad\Rightarrow\quad X+\epsilon\sum_B m_{AB} \pi_{e^B}(X) =(e^A)^{-1} Q^A.$$

Using $X=\sum_{B} \pi_{e^B}(X)$ we obtain
$$\sum_B (1+\epsilon m_{AB})\pi_{e^B}(X)=\sum_B\pi_{e^B}((e^A)^{-1}Q^A).$$
In the case $\epsilon\neq \pm 1$ we obtain the statement of the theorem.

Let $\epsilon=-1$. Then
$$X-(e^A)^{-1} X e^A=(e^A)^{-1} Q^A \Rightarrow 2\pi_{\{A\}}(X)=\pi_{\{A\}}((e^A)^{-1} Q^A)+\pi_{[A]}((e^A)^{-1} Q^A)$$
and we obtain the statement of the theorem for this case. Note that
$$\pi_{[A]}((e^A)^{-1}Q^A)=\frac{1}{2}((e^A)^{-1}Q^A+(e^A)^{-1}(e^A)^{-1}Q^Ae^A)=$$
$$=\frac{1}{2}((e^A)^{-1}Q^A+e^A(e^A)^{-1}Q^A(e^A)^{-1})=\frac{1}{2}\{(e^A)^{-1}, Q^A\}=\pm\frac{1}{2}\{e^A, Q^A\},$$
because $(e^A)^{-1}=\pm e^A$.

In the case $\epsilon=1$ proof is similar. Analogously we have $$\pi_{\{A\}}((e^A)^{-1}Q^A)=\pm\frac{1}{2}[e^A, Q^A].$$
\end{proof}

\begin{thm}\label{theoremSimp} In the case of even $n$ we have
\begin{eqnarray}
\pi_{e^A}(U)=\frac{1}{2^n}\sum_B m_{AB}(e^{B})^{-1} U e^B.
\end{eqnarray}
In the case of odd $n$ we have
\begin{eqnarray}
\pi_{e^A, \widetilde{e^A}}(U)=\pi_{e^A}(U)+\pi_{\widetilde{e^A}}(U)=\frac{1}{2^{n-1}}\sum_{B\in\I_{\First}}l_{AB}(e^{B})^{-1} U e^B.
\end{eqnarray}
Note that we can use instead of $\I_{\First}$ any adjoint set $\I_{\Adj}$.
\end{thm}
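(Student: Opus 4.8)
The plan is to \emph{invert} the relation established in Theorem \ref{theoremSimp1}. That theorem writes each conjugation $F_{e^B}(U)=(e^B)^{-1}Ue^B$ as the linear combination $\sum_C m_{BC}\pi_{e^C}(U)$ of the elementary projections, i.e.\ it applies the matrix $M_n$ to the vector of projections. Since Theorem \ref{obrat} asserts $M_n^{-1}=\frac{1}{2^n}M_n$ (equivalently $M_n^2=2^nI$) in the even case and $L_n^{-1}=\frac{1}{2^{n-1}}L_n$ (equivalently $L_n^2=2^{n-1}I$) in the odd case, applying $M_n$ (resp.\ $L_n$) a second time should recover the individual projections. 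This is exactly the formula to be proved, so the whole argument reduces to substituting (\ref{odsv}) and invoking Theorem \ref{obrat}.

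For even $n$ I would substitute (\ref{odsv}) directly and regroup:
$$\frac{1}{2^n}\sum_B m_{AB}(e^B)^{-1}Ue^B=\frac{1}{2^n}\sum_B m_{AB}\sum_C m_{BC}\pi_{e^C}(U)=\frac{1}{2^n}\sum_C\Bigl(\sum_B m_{AB}m_{BC}\Bigr)\pi_{e^C}(U).$$
The inner sum is precisely the $(A,C)$ entry of $M_n^2$, which equals $2^n\delta_{AC}$ by Theorem \ref{obrat}. Hence only the term $C=A$ survives and the right-hand side collapses to $\pi_{e^A}(U)$. This half is a short calculation once Theorem \ref{obrat} is in hand.

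The odd case is the part I expect to be the main obstacle. Here $M_n$ is singular because conjugation cannot distinguish a multi-index from its adjoint: as in the proof of Theorem \ref{theoremSvDual}, centrality of $e^{1\ldots n}$ forces $(e^B)^{-1}Ue^B=(e^{\widetilde B})^{-1}Ue^{\widetilde B}$, and together with Theorem \ref{theoremSimp1} this yields $m_{BC}=m_{B\widetilde C}$ for all indices. The key step is then to use $m_{BC}=m_{B\widetilde C}$ to collapse the full sum over $C\in\I$ in (\ref{odsv}) into a sum over $C\in\I_{\First}$, pairing each $C$ with its adjoint $\widetilde C\in\I_{\Last}$; for $B\in\I_{\First}$ this rewrites the conjugation as $(e^B)^{-1}Ue^B=\sum_{C\in\I_{\First}}l_{BC}(\pi_{e^C}(U)+\pi_{\widetilde{e^C}}(U))$, where $l_{BC}=m_{BC}$. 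Substituting this into $\frac{1}{2^{n-1}}\sum_{B\in\I_{\First}}l_{AB}(e^B)^{-1}Ue^B$ and regrouping produces the $(A,C)$ entry of $L_n^2$, which equals $2^{n-1}\delta_{AC}$ by Theorem \ref{obrat}; again only $C=A$ survives and we obtain $\pi_{e^A,\widetilde{e^A}}(U)$. The remark that $\I_{\First}$ may be replaced by any adjoint set $\I_{\Adj}$ follows because the collapse argument and the analogue of Theorem \ref{obrat} hold verbatim for any such partition. The delicate point throughout is the bookkeeping of signs and the adjoint pairing in this odd-dimensional collapse; everything else is formal substitution.
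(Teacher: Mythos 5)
Your proposal is correct and follows essentially the same route as the paper: the paper also writes the relation of Theorem \ref{theoremSimp1} in matrix form, inverts it via Theorem \ref{obrat} for even $n$, and for odd $n$ collapses $\sum_B m_{AB}\pi_{e^B}(U)$ to $\sum_{B\in\I_{\First}}l_{AB}\pi_{e^B,\widetilde{e^B}}(U)$ before applying the inverse of $L_n$. Your explicit verification that $m_{BC}=m_{B\widetilde{C}}$ via centrality of $e^{1\ldots n}$ just spells out a step the paper leaves implicit.
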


\begin{proof} From (\ref{odsv}) we obtain
$$
\left( \begin{array}{l}
 F_{e}(U) \\
 F_{e^a}(U) \\
 \ldots \\
 F_{e^{1\ldots n}}(U) \end{array}\right)=
 M_n
 \left( \begin{array}{l}
 \pi_e(U) \\
 \pi_{e^1}(U) \\
 \ldots \\
 \pi_{e^{1\ldots n}}(U) \end{array}\right).
$$
Using Theorem \ref{obrat} in the case of even $n$ we obtain
$$
\left( \begin{array}{l}
 \pi_e(U) \\
 \pi_{e^1}(U) \\
 \ldots \\
 \pi_{e^{1\ldots n}}(U) \end{array}\right)=
 \frac{1}{2^{n}}M_n
 \left( \begin{array}{l}
 F_{e}(U) \\
 F_{e^1}(U) \\
 \ldots \\
 F_{e^{1\ldots n}}(U) \end{array}\right).
$$
In the case of odd $n$ we have
\begin{eqnarray}
F_{e^A}(U)=(e^A)^{-1} U e^A=\sum_{B}m_{AB} \pi_{e^B}(U)=\sum_{B\in\I_{\First}}l_{AB}\pi_{e^B, \widetilde{e^{B}}}(U).
\end{eqnarray}
and use Theorem \ref{obrat}.
\end{proof}

Let us give some examples.

In the case $n=1$ we have
\begin{eqnarray}
M_1=\begin{pmatrix}1 & 1 \cr 1 &1\end{pmatrix},\qquad L_1=\begin{pmatrix} 1 \end{pmatrix},\qquad F_{e}(U)=F_{e^1}(U)=U,\qquad \pi_{e, e^1}(U)=U.\nonumber
\end{eqnarray}
In the case $n=2$ we have
$
M_2=\begin{pmatrix}1 & 1 &1 &1 \cr 1 &1&-1&-1\cr1&-1&1&-1\cr1&-1&-1&1\end{pmatrix},\nonumber
$
\begin{eqnarray}
F_{e}(U)=U,\qquad F_{e^1}(U)=\pi_e(U)+\pi_{e^1}(U)-\pi_{e^2}(U)-\pi_{e^{12}}(U),\nonumber\\
F_{e^2}(U)=\pi_e(U)-\pi_{e^1}(U)+\pi_{e^2}(U)-\pi_{e^{12}}(U),\nonumber\\
F_{e^{12}}(U)=\pi_e(U)-\pi_{e^1}(U)-\pi_{e^2}(U)+\pi_{e^{12}}(U).\nonumber
\end{eqnarray}
\begin{eqnarray}
\pi_{e}(U)&=&\frac{1}{4}(e_A Ue^A),\nonumber\\
\pi_{e^1}(U)&=&\frac{1}{4}(eUe+(e^1)^{-1} U e^1-(e^2)^{-1} U e^2-(e^{12})^{-1} U e^{12}),\nonumber\\
\pi_{e^2}(U)&=&\frac{1}{4}(eUe-(e^1)^{-1} U e^1+(e^2)^{-1} U e^2-(e^{12})^{-1} U e^{12}),\nonumber\\
\pi_{e^{12}}(U)&=&\frac{1}{4}(eUe-(e^1)^{-1} U e^1-(e^2)^{-1} U e^2+(e^{12})^{-1} U e^{12}).\nonumber
\end{eqnarray}
In the case $n=3$ we have
$
L_2=\begin{pmatrix} 1 &1&1&1\cr1 &1&-1&-1\cr1 &-1&1&-1\cr1 &-1&-1&1\cr\end{pmatrix},\nonumber
$
\begin{eqnarray}
\pi_{e, e^{123}}(U)&=&\frac{1}{4}(eUe+(e^1)^{-1} U e^1+(e^2)^{-1} U e^2+(e^{3})^{-1} U e^{12}),\nonumber\\
\pi_{e^1, e^{23}}(U)&=&\frac{1}{4}(eUe+(e^1)^{-1} U e^1-(e^2)^{-1} U e^2-(e^{3})^{-1} U e^{3}),\nonumber\\
\pi_{e^2,e^{13}}(U)&=&\frac{1}{4}(eUe-(e^1)^{-1} U e^1+(e^2)^{-1} U e^2-(e^{3})^{-1} U e^{3}),\nonumber\\
\pi_{e^3, e^{12}}(U)&=&\frac{1}{4}(eUe-(e^1)^{-1} U e^1-(e^2)^{-1} U e^2+(e^{3})^{-1} U e^{3}).\nonumber
\end{eqnarray}

\section{Conclusion}

In the present paper we consider different operators
$$F_S(U)=\frac{1}{|S|}\sum_{A\in S}(e^A)^{-1}U e^A,$$
acting on real Clifford algebra $\cl(p,q)$. Note that all theorems of this paper can be reformulated without changes for complex Clifford algebra.

We can also consider operators with other subsets $S\subseteq \I$. Note that not for every subset $S\subseteq \I$, the set $\{e^A | A\in S\}$ is a group.

In \cite{Marchuk:Shirokov} we consider the following operator in Clifford algebra $\cl(p,q)$
$$F_1(U)=e_a U e^a$$
and prove that $e_a U e^a=\sum_{k=0}^n (-1)^k (n-2k)\pi_k(U)$.

In \cite{YM} we present the relation between this operator and projections onto subspaces of fixed ranks. We use this relation to present new class of gauge invariant solutions of Yang-Mills equations.

We can consider operators (\ref{Reyn0}) with the following subsets $S\subseteq \I$:
\begin{eqnarray}
\I_{\Even}=\{A\in\I,\, |A| - \mbox{even}\},\qquad \I_{\Odd}=\{A\in\I,\, |A| - \mbox{odd}\},\nonumber\\
\I_{k}=\{A\in\I,\quad |A|=k\},\qquad k=0, 1, \ldots, n,\nonumber\\
\I_{\overline k}=\{A\in\I,\quad |A|=m\mod 4\},\qquad m=0, 1, 2, 3.\nonumber
\end{eqnarray}
In the last case we use the concept of so-called {\it quaternion type $m$} \cite{QT} of Clifford algebra element, $m=0, 1, 2, 3$. There is a relation between these operators and projective operators onto fixed subspaces of Clifford algebras. This is a subject for further research.

In \cite{funct} we consider operators $\sum_A \gamma^A U \beta_A$ with 2 different sets $\gamma^a$, $\beta^a$, $a=1, \ldots, n$ of Clifford algebra elements that satisfy
$$\gamma^a \gamma^b+ \gamma^b \gamma^a=2\eta^{ab}e,\qquad \beta^a \beta^b+ \beta^b \beta^a=2\eta^{ab}e.$$
We use these operators to prove generalized Pauli's theorem and some other problems about spin groups (see \cite{Pauli1}, \cite{Pauli2}, \cite{Pauli3}, \cite{Pauli4}, \cite{Pauli5}).

The results of this article (especially about the relation between projection operators and averaging operators; solving commutator equations) may be used in computer calculations.

\subsection*{Acknowledgment} The author is grateful to the three referees for their insightful and constructive comments.
Also the author is grateful to N.G.Marchuk for fruitful discussions.


\end{document}